\numberwithin{equation}{section}
\title[Gravitation on the Light Cone]{Gravitation as a Statistical Theory on the Light Cone}
\author[J.M.\ Isidro]{Jos{\'e} M. Isidro}
\address{Instituto Universitario de Matem\'atica Pura y Aplicada \\ Universidad Polit\'ecnica de Valencia \\
Valencia 46022 \\ Spain}
\email{joissan@mat.upv.es}
\author[C.F.\ Paganini]{Claudio F. Paganini  }
\address{Fakult\"at f\"ur Mathematik \\ Universit\"at Regensburg \\ D-93040 Regensburg \\ Germany}
\email{claudio.paganini@ur.com}
\author[A.\ Pesci]{Alessandro Pesci }
\address{INFN\\ Sezione di Bologna\\ Via Irnerio 46\\ 40126 Bologna \\ITALY}
\email{pesci@bo.infn.it}
\newtheorem{Def}{Definition}[section]
\newtheorem{Thm}[Def]{Theorem}
\newtheorem{Lemma}[Def]{Lemma}
\newtheorem{Remark}[Def]{Remark}
\newtheorem{Corollary}[Def]{Corollary}
\newcommand{\Lcal}{{\mathcal{L}_x}}
\newcommand{\la}{\langle}
\newcommand{\ra}{\rangle}
\newcommand{\N}{\mathbb{N}}
\newcommand{\R}{{\mathord{\mathbb R}}}
\newcommand{\kbar}{\bar{k}}
\DeclareFontFamily{OT1}{rsfso}{}
\DeclareFontShape{OT1}{rsfso}{m}{n}{ <-7> rsfso5 <7-10> rsfso7 <10-> rsfso10}{}
\begin{document}
\begin{abstract} 
In this paper, we will explore Padmanabhan’s mesoscopic, statistical approach to
gravity \cite{padmanabhan2017atoms} with a twist. The general picture of his approach is that spacetime is made of large numbers of localized quantum degrees of freedom.  Padmanabhan assumed that the degrees of freedom 
of a given quantum state of geometry contribute, after averaging over fluctuations, 
a vector degree of freedom for spacetime at a point. 
For null vectors, this can be regarded as corresponding to one single vector, 
i.e. a pure state, for the statistical ensemble on the light cone at every point. 
In the present paper, we consider instead the case where the states of the
gravitational degrees of freedom are spread out and overlap, with only probabilistic
information on which of them determines the actual spacetime at a point. 
In the continuum limit, this corresponds to a mixed state for the statistical ensemble 
on the light cone at every point.\\
This change in assumptions leads to some interesting observations. When we define a statistical ensemble on the light cone, its variance ``knows'' about the interior of the light cone. As an intriguing consequence, we find that the cosmological constant can be related to the variance over the light cone.\\
With a mixed state, we can no longer derive the gravitational field equations from an entropy functional. Here, instead, we show that a naive implementation of the measure of a mixed state on the light cone in the variation principle leads to modified measure theories (MMT) as the grand canonical ensemble and allows one to reframe unimodular gravity as the canonical ensemble of a statistical theory on the light cone. 
\end{abstract}
\maketitle

\tableofcontents

\section{Introduction }
The present paper is a result of a series of papers comparing the structures and ideas of different approaches to fundamental physics~\cite{ethcfs, mmt-cfs, oct-cfs} as well as the upcoming articles~\cite{cfstdqg, tracedynamics}. The goal of this project is to motivate the community to establish an extensive collection of such articles as a sort of "Rosetta stone" for approaches to fundamental physics. The hope is that such a set of dictionaries of ideas helps the exchange across approaches and thereby catalyzes progress in the foundations of physics. The present article demonstrates the potential of this project, set in the context of thermodynamic approaches to gravity, it takes inspiration from Causal Fermion Systems (CFS) \cite{website, cfs} and links directly to Modified Measure Theories (MMT) \cite{NGVE,Fieltheory,manymeasures} and unimodular gravity (UG) \cite{UnimodularGRI,carballo2022unimodular,padilla2015note,smolin2009quantization,ng1991unimodular,jain2012testing}. 

The original idea of spacetime thermodynamics came about in the context of black holes. Bekenstein \cite{bekenstein1980black} realized that the black hole parameters satisfy the same relations as conjugate variables in thermodynamics and boldly suggested assigning an entropy to black holes proportional to the area of their event horizon. This idea was given further credibility by Hawking's discovery \cite{hawking1974black, hawking1975particle} that black holes indeed have a temperature, thus fixing the proportionality constant to be $\frac{k_B}{4 l_p^2}$. In his seminal paper in 1995 Jacobson \cite{jacobson1995thermodynamics} then showed that in fact the equations of General Relativity can be understood as thermodynamic balance equations across any null surface through any point in spacetime. To obtain the full Einstein equations in the bulk, one then needs to assume that the stress energy tensor of the matter fields is divergence-free. This fixes the dynamics in the bulk up to a constant of integration which can be identified with the cosmological constant.

Ever since, there has been great interest in this approach to the puzzle of (quantum) gravity. For recent work, see, for example, \cite{bekenstein1973black,bardeen1973four,hawking1975particle,gibbons1977action,jacobson1995thermodynamics,wald1999gravitation,chirco2010nonequilibrium,jacobson2016entanglement,bueno2017entanglement,parikh2018einstein,svesko2019entanglement,alonso2022thermodynamics,banihashemi2022thermodynamic,jacobson2023partition, requardt2023thermal}. For our results here, the paper \cite{alonso2022thermodynamics} is of particular interest, as they argue that rather than giving rise to the Einstein-Hilbert action, thermodynamic arguments actually favor UG which features equivalent dynamics. Our main focus, however, will be on the contributions by Padmanabhan \cite{padmanabhan2011entropy,padmanabhan2014general,padmanabhan2015one,padmanabhan2017cosmic,padmanabhan2017atoms} and his ideas of a mesoscopic Boltzmann approach to gravity. Hence in his reasoning, he assumes that there do exist internal degrees of freedom underlying every point of spacetime; however, he remains agnostic to the concrete nature of these microscopic degrees of freedom. The goal of his approach was to obtain a variational principle for gravity such that the equations of motion are invariant under $\mathcal{L}_M\rightarrow \mathcal{L}_M+C$ where $\mathcal{L}_M$ is the matter Lagrangian and $C$ is a constant.
To that end he studied the small sphere limit $x\rightarrow y$ of Synge's world function modified by adding a minimal length $\sigma_L(x,y)=\sigma(x,y) +L^2$ \footnote{Synge's world function $\sigma(x,y)$ is given by one half times the geodesic interval between $x$ and $y$ squared.} to derive a "gravitational density of states".  

His original derivation was performed in the Euclidean setting and carried over to the Lorentzian setting via analytic continuation. One of the authors of this paper later showed \cite{pesci2019quantum,pesci2019spacetime, pesci2020minimum, chakraborty2019raychaudhuri} that the same result can be obtained working directly in the Lorentzian setting. In the course of his investigation, Padmanabhan frequently identified the light cone as the configuration space for the internal degrees of freedom\footnote{The derivation for the density of states works equally well for time-like or space-like vectors, however the particular emphasis on null-vectors is due to their connection to horizons and thereby the original thermodynamic considerations.}. This is the starting point for our present work. In his later work \cite{padmanabhan2017atoms} he remarked that one should keep in mind that the null vectors that enter the variation principle for the entropy should be thought of as representing underlying gravitational quantum degrees of freedom which are localized in a small volume around a point $x$ with a sharply peaked distribution in momenta. He demands that when averaging over this state in momentum space, the result should be a null vector again. This implies that in the continuum limit where we consider a state on the light cone at a point $x$ this corresponds to a pure state, i.e. a Dirac delta in configuration space. 

In the present paper, we take inspiration from Causal Fermion Systems. In the continuum limit of the Minkowski vacuum, the underlying Hilbert space is given by the Dirac sea \cite{cfs}. It is therefore natural to associate these states with the gravitational sector of the theory. Hegerfeldt's theorem \cite{hegerfeldt1974remark,castrigiano2017dirac,finster2023incompatibility} tells us that vectors in this Hilbert space cannot be localized in any spatially compact region. In fact, if we consider the plane wave solutions, eigenstates of the momentum operator, the Dirac sea is built from maximally delocalized quantum states. 
As a result, we should consider that the volume at a point $x$ is not sourced by a single (pure) gravitational state, but by a collection of states;
and, assuming to have only probabilistic information on them,
by a mixed of state.
In the continuum limit, this corresponds to a general measure on configuration space, i.e., on the light cone.

 Padmanabhan \cite{padmanabhan2017atoms} argued that for the (pure) states he considered, the variance is small and hence can be neglected in a first-order approximation. This can not hold true if we consider mixed states. Working with mixed states leads to a number of interesting observations.
In particular, we find that taking the average over a probability field naturally gives rise to a time-like vector field that is similar in nature to the regularizing vector field that plays a key role in \cite{baryogenesis}. We also find that the variance can be linked to the cosmological constant if we consider the validity of the Einstein field equations as an observational fact.

Furthermore, it turns out that working with mixed states allows us to link several lines of research in the context of modified/thermodynamic gravity. In this paper, we explore the link with MMT and UG which is a subset of MMT, as explained in~\cite[Section 2.2]{mmt-cfs}. MMT was developed with the same goal in mind that was driving Padmanabhan's research: To find a variational principle for which the equations of motion
for the matter sector and the gravitational sector are both invariant,
to the addition of constants in the Lagrangian. Instead of focusing on the light cone, MMT replace the metric measure $\sqrt{-g}d^4x$ with a measure that is independent of the metric. As a result, additive constants to the Lagrangian drop out of the equations of motion as desired\footnote{We will touch briefly on the conjecture that these approaches are indeed two sides of the same coin if one considers the conformal factor of the metric as an independent degree of freedom. However, a rigorous discussion of this idea is beyond the scope of this paper.  }. 

In our present paper, we show that even a naive implementation of a mixed state on the light cone in the variation principle has interesting consequences. It naturally leads to MMT as the grand canonical ensemble and allows us to reframe UG as the canonical ensemble of a statistical theory on the light cone. Despite the ad hoc nature of these action principles, on the conceptual level they fit nicely with the considerations regarding baryogenesis \cite{baryogenesis} in the context of CFS \cite{cfs,website}. In particular, they give a complementary point of view on MMT. 
This perspective also fits with the arguments in \cite{alonso2022thermodynamics} as the derivation of UG requires the conventionally defined matter stress-energy tensor to be divergence-free, an assumption that is dropped in MMT. However, given the ad-hoc nature of the variation principles brought forward in this paper, they likely will not be the final answer, but at least they seem to be a good starting point for further investigations. 

Finally, we mention that this mathematical setup has the potential to link to Verlinde's heuristic derivation \cite{verlinde2011origin,verlinde2017emergent,hossenfelder2017covariant} of modified Newtonian dynamics from entropic considerations.

\subsection{Organization}
In Section \ref{sec:setup} we introduce basic mathematical notations and useful facts to keep in mind for the later constructions. In Section \ref{sec:paddy} we summarize the key arguments in Padmanabhan's papers. The mathematical core of the paper is in Section \ref{sec:probonlightcone} where we introduce the probability density on the light cone. Here we recall the fact that averaging over the light cone always results in a time-like vector except for pure states. Finally, we prove that the variance is a positive definite tensor. The core physical results are presented in the following two sections. In Section \ref{sec:physics} we observe that the variance can be connected to the cosmological constant. In Section \ref{sec:mmt} we show how theories with modified measures can be obtained if we consider a density of states instead of a probability measure on the light cone. Finally, in Section \ref{sec:outlook}, we give an overview of possible applications and generalizations of our mathematical framework.

\section{Mathematical Background}\label{sec:setup}
In the present work we will always consider $(M, g)$ to be a spacetime, i.e. a four-dimensional orientable Lorentzian manifold. The study of null-geodesics, i.e. geodesics $\gamma$ who's tangent vector $\dot\gamma^\mu$ satisfies 
\begin{equation}
    g_{\mu\nu}\dot\gamma^\mu\dot\gamma^\nu =0
\end{equation}
at every point has been of considerable interest as they play a crucial role in characterizing the causal structure of spacetime, see e.g., \cite{minguzzi2007causal,minguzzi2007causal2}.

From a physical perspective, null geodesics describe the path of light rays and are thus, in principle, amendable to experiment/observation. For that reason, they play a key role in the axiomatic reconstruction of spacetime from physical structures by Ehlers, Schild and Piran \cite{linnemann2021constructive,ehlers2012republication}. In the mathematical literature, there has been recent progress in the study of the geometry of the space of null-geodesics \cite{low1989geometry,hedicke2020conformally,chernov2010legendrian}.
In the present work, we are interested in a geometric structure closely associated with the space of null-geodesics, namely the bundle of past light cones. 
\begin{Def}[Past Light Cone Bundle (PLCB)] Let $(M,g)$ be a smooth spacetime, that is, a smooth time orientable Lorentzian manifold. Then the PLCB is the subset of the tangent bundle given by
    \begin{equation}
        \mathcal{L}:= \{ (x,v) \in T M| g_x(v,v)=0,  v\neq 0\ \text{ and }  v \text{ past directed}\}.
    \end{equation}
\end{Def}
The fiber $ \mathcal{L}_x$ of $\mathcal{L}$ at a point $x$ is just the past light cone at this point. In \cite{Bruce_2022} it was shown that the PCLB $\mathcal{L}$ is, in fact, a submanifold of the tangent bundle. This is a first sanity check that it makes sense to consider $\mathcal{L}$ as a configuration space for a physical theory. 

To make calculations more tractable, we will use a coordinate system based on an orthonormal tetrad $e_0^\mu,e_1^\mu,e_2^\mu,e_3^\mu $ where $e_0^\mu$ is time-like and the other tetrad vectors are space-like. Using the notation $\bar k \cdot \bar e ^\mu = k_1e_1^\mu+ k_2 e_2^\mu+k_3 e_3^\mu$ we notice that any null vector can then be written as 
\begin{equation}\label{eq:nullvectors}
    n^\mu = \lambda (e_0^\mu + \bar k \cdot \bar e ^\mu)
\end{equation}
where $\bar k = (k_1, k_2, k_3)$ is a three vector with unit Euclidean norm $|k|=1$. This makes the $\R\times S^2$ topology of the light cone explicit. We make use of this split in the definition of the past celestial sphere bundle.
\begin{Def}[Past Celestial Sphere Bundle (PCSB)]
    Let $(M,g)$ be a smooth spacetime and $X$ be a smooth, non-vanishing timelike vector field. Then the PCSB is the subset of the tangent bundle given by
    \begin{equation}
        CSM:= \{ v \in T_x M| g(v,v)=0, g(v,X)=1\}.
    \end{equation}
\end{Def}
From the results in \cite{Bruce_2022}, we get the following corollary.
\begin{Corollary}
    There is a global factorization of the PLCB by the PCSB. 
    \begin{equation}
        \mathcal{L}= \R_+ \times SM.
    \end{equation}
\end{Corollary}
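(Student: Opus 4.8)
The plan is to write down an explicit global diffeomorphism and then verify it is one. Fix a smooth non-vanishing timelike vector field $X$ on $M$; such a field exists precisely because $(M,g)$ is time orientable, and it is the same field entering the definition of the PCSB, which I will denote $CSM$. The only causality input needed is elementary: for a timelike vector $X_x$ the linear functional $v\mapsto g_x(v,X_x)$ cannot vanish on a non-zero null vector, and it has a constant sign on each of the two null half-cones at $x$; orient $X$ so that this sign is positive on the past half-cone — equivalently, so that the normalized vectors below are past directed, consistently with the word ``past'' in the PCSB. Then $g_x(v,X_x)>0$ for every $(x,v)\in\mathcal{L}$, and I set
\[
    \Phi\colon \mathcal{L}\longrightarrow \R_+\times CSM,\qquad
    \Phi(x,v):=\Bigl(g_x(v,X_x),\ \tfrac{1}{g_x(v,X_x)}\,v\Bigr).
\]
The first slot lies in $\R_+$ by the sign discussion, and the second is a null vector $w$ at $x$ with $g_x(w,X_x)=1$, hence an element of $CSM$, so $\Phi$ is well defined.

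Next I would exhibit the inverse. Define $\Psi\colon \R_+\times CSM\to \mathcal{L}$ by $\Psi(\lambda,w):=\lambda w$; this lands in $\mathcal{L}$ because $\lambda w$ is null, non-zero and past directed (a positive rescaling of the past-directed vector $w$). Bilinearity of $g$ together with the normalization $g_x(w,X_x)=1$ give $\Phi\circ\Psi=\mathrm{id}$ and $\Psi\circ\Phi=\mathrm{id}$ by a one-line computation, so $\Phi$ is a bijection. Conceptually this says that fibrewise positive rescaling is a free $\R_+$-action on $\mathcal{L}$ and $X$ singles out a global slice $CSM$, trivializing the associated principal $\R_+$-bundle.

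It remains to upgrade this bijection to a diffeomorphism, and here the input from \cite{Bruce_2022} is essential: it supplies the smooth embedded submanifold structures on $\mathcal{L}\subset TM$ and on $CSM\subset TM$, so that smoothness of $\Phi$ and $\Psi$ may be tested through the ambient $TM$. The function $(x,v)\mapsto g_x(v,X_x)$ is the restriction to $\mathcal{L}$ of a globally smooth function on $TM$ (fibrewise linear in $v$, with coefficients depending smoothly on $x$ through $g$ and the smooth field $X$), and it is nowhere zero on $\mathcal{L}$; since scalar multiplication $TM\times\R\to TM$ is smooth, $\Phi$ is smooth into $TM\times\R_+$, hence into the submanifold $\R_+\times CSM$. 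The same reasoning makes $\Psi$ smooth. Therefore $\Phi$ is a diffeomorphism, which is exactly the asserted global factorization $\mathcal{L}=\R_+\times CSM$.

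I expect the genuine work, such as it is, to be in that last paragraph: checking that the naive formulas define \emph{smooth} maps \emph{between the submanifolds}, not merely set maps into $TM$, which is where one cashes in the submanifold result of \cite{Bruce_2022}. The remaining ingredients — existence of $X$, the constant positive sign of $g(v,X)$ on the past cone, and the algebra showing $\Phi$ and $\Psi$ are mutually inverse — are routine. One further remark worth including is that the decomposition depends on the choice of $X$ only up to a smooth positive rescaling on the $CSM$ factor (replacing $X$ by another admissible field $X'$ conjugates $\Phi$ by multiplication by the positive function $g(\cdot,X')$ on $CSM$); the \emph{existence} of a global product structure, which is all the corollary claims, is independent of that choice.
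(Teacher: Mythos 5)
Your argument is correct and is essentially the paper's own (one-sentence) justification made explicit: the paper also obtains the factorization by rescaling along the fibres of the $\R_+$-action singled out by the timelike field $X$, citing \cite{Bruce_2022} for the submanifold structure. Your version simply writes out the mutually inverse maps $\Phi$, $\Psi$ and the smoothness check that the paper leaves implicit.
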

This can be seen immediately from the definition of the PCSB and a global rescaling of the vector field $X$ by a scalar $\lambda$. This fact will become especially important in our follow-up paper \cite{qbit}, where we will remove most of the structures that we assumed for simplicity in the present paper. This concludes our collection of relevant notation and existing results.

\section{Padmanabhan's Thermodynamic Considerations} \label{sec:paddy}
In \cite{jacobson1995thermodynamics} Jacobson showed that the equations of motion of General Relativity on the light cone 
\begin{equation}\label{eq:jacobson}
 (R_{\mu\nu}-\kappa T_{\mu\nu})  n^\mu n^\nu  =0, \qquad \forall n^\mu \text{ with } g(n,n)=0
\end{equation}
can be recovered as a thermodynamic balance equation. To obtain the full Einstein equations from the equations on the light cone, one has to impose the fact that the stress-energy tensor is divergence free $\nabla^\mu T_{\mu\nu}=0$. This leads to 
 \begin{equation}\label{eq:einstein}
    R_{\mu\nu}-\frac{1}{2}R+\Lambda g_{\mu\nu}=  \kappa T_{\mu\nu}
\end{equation}
with $\Lambda$ as a free constant of integration.
Padmanabhan's idea was to extend on this argument and derive \eqref{eq:jacobson} from mesoscopic considerations. The philosophy of this approach is to assume that there exist microscopic degrees of freedom underlying spacetime and GR, which are macroscopic phenomena. However, no assumptions are made about the detailed properties of these microscopic degrees of freedom other than their existence. To this end Padmanabhan postulated, e.g. in \cite{padmanabhan2016atoms}, a density of microscopic degrees of freedom $\rho (x, \phi_A)$, where $\phi_A$ is a yet to be specified variable labeling the internal microscopic degrees of freedom. He arrived at this conclusion by investigating the minimum-length metric (also quantum metric, or q-metric for short), i.e., a matric description of spacetime with a built-in minimal length~\cite{kothawala2013minimal,kothawala2014entropy,stargen2015small,padmanabhan2016spacetime,nonlocal2023kothawala}.
The q-metric postulates that the squared integral between two points $\sigma^2(x,y)$ (i.e., two times Synge's world function \cite{synge1960relativity}) is modified on short scales\footnote{This allows for a covariant implementation of a minimal length scale in the universe. See~\cite{hossenfelder2013minimal} for a review of minimal length scenarios in quantum gravity and ~\cite{singh2021quantum} for a connection to holography.}. The simplest implementation of this idea is the addition of a constant to its square (the results derived from the q-metric do not depend on the details of the modification)
\begin{equation}
    \sigma(x,y)^2 \longrightarrow \sigma_L(x,y)^2=\sigma(x,y)^2+L^2 .
\end{equation}

 The q-metric corresponding to this modified world function is a bitensor, i.e. it is defined in terms of two points $x$ and $y$, which is singular everywhere in the coincidence limit (where $x\rightarrow y$). This singular behaviour is a direct consequence of the minimal length in the corresponding distance function. 

 One can calculate the volume and surface of an equi-geodesic ball with $\sigma_L(x,y)^2=C$. Taking the small-sphere limit, i.e. $x \rightarrow y$, one can compare the result from the modified distance with that from the original metric. In the ordinary spacetime metric volumes and areas vanish, of course, in the coincidence limit. For the q-metric, however, only the volume vanishes in the coincidence limit, while the area does not. This paves the way to introduce the density of gravitational degrees of freedom.  Padmanabhan then defines the density of microscopic degrees of freedom by the limit
\begin{equation}
    \rho (x, \phi_A)= \lim_{x\rightarrow y} \frac{\sqrt{h_{\sigma_L}}}{\sqrt{h_{\sigma_{L, \text{flat}}}}} = 1 - \frac{1}{6}L^2R_{\mu\nu}n^\mu n^\nu .
\end{equation}
where $\sqrt{h_{\sigma_L}}$ and and $\sqrt{h_{\sigma_{L, \text{flat}}}}$ are the area elements associated with the modified distance function in curved space and in flat space respectively and $n^\mu$ is the surface normal vector. Working in a Euclidean setting, Padmanabhan interpreted the zero point as the light cone after Wick rotation and $n^\mu$ in the limit accordingly as null vectors. This calculation led Padmanabhan to identify the internal degrees of freedom $\phi_A$ in his density of microscopic degrees of freedom by the set of null vectors $n^\mu$ at a point.

In his later papers \cite{padmanabhan2017atoms,padmanabhan2018kinetic,padmanabhan2022microscopic}, Padmanabhan then conceived a given quantum state of spacetime as the product of a collection of elementary quantum degrees of freedom over elemental volumes throughout all of spacetime. To go from the quantum state of spacetime to the classical description of spacetime he defined in~\cite{padmanabhan2017atoms} the average over quantum fluctuations at every point  $x$ representing an elemental volume through the functional integral 
\begin{equation}\label{eq:avg}
    \la n_\mu\ra= \int \mathcal{D}n \; n(x)_\mu P(n(x)^\mu, x) = l^\mu(x)
\end{equation}
where $P(n(x)^\mu, x)$ is parameterized by some null vector field $l^\mu(x)$, and $P$ is the probability that the actual quantum degree of freedom is given by $n(x)^\mu$ at $x$. Here, in principle, one should think of $P$  as a sharply peaked Gaussian in the variable $[n(x)^\mu- l^\mu(x)]$ supported in a small volume around the point $x$. This suggests, that $\la n_\mu\ra$ in \eqref{eq:avg} should be regarded as the expectation value of the momentum over a single quantum degree of freedom treated essentially as a pure. In the continuum limit, which we consider subsequently, such a state corresponds to a Dirac measure on the light cone.

Following Padmanabhan, we will treat the light cone itself as the configuration space of the internal variables $n^\mu$. However, inspired by CFS we now consider a
situation where the states of the underlying gravitational quantum degrees of freedom
overlap in spacetime,
with only probabilistic information of which state actually determines spacetime at $x$. In the present set-up this motivates the consideration of mixed states described by general invariant probability measures on the light cone $dP(n^\mu)$. Given \eqref{eq:nullvectors}, if the measure is absolutely continuous with respect to the Lebesgue measure of the tetrad coordinates, we can write this probability measure as $dP(n^\mu)=P(\lambda,\kbar)d\lambda d \Omega$, where $d\Omega$ is the canonical measure on $S^2$ and $d\lambda$ the canonical measure on $\R_+$. We define the average over the light cone at a point $x$ by
\begin{equation}
    \la n^\mu\ra_{\Lcal(P)}= \int_{R_+\times S^2} \lambda (e_0^\mu + \bar k \cdot \bar e ^\mu) P(\lambda, \bar k) d\lambda d\Omega .
\end{equation}
Following Padmanabhan we define the variance analogously.
\begin{Def}[Gravitational Dissipation]\label{def:gravdissipation}
 We define the gravitational dissipation as the variance over the light cone \begin{equation} \label{eq:variance}
    \sigma(P)^{\mu\nu} = \la n^\mu n^\nu\ra_{\Lcal(P)} - \la n^\mu\ra_{\Lcal(P)} \la n^\nu\ra_{\Lcal(P)}.
\end{equation}  
\end{Def}
Here $\la \cdot \ra_\Lcal$ again refers to the average over the configuration space of the internal variable $n^\mu$  (i.e. the light cone). In analogy to fluid mechanics, where
\begin{equation}
    \Sigma^{ab} = \la p^a p^b\ra_p - \la p^a\ra_p \la p^b\ra_p
\end{equation}
is the dissipation tensor and $\la \cdot \ra_p$ denotes the average over momentum space with $P^a=\la p^a\ra_p$ being the flow velocity of the macroscopic fluid. For the collection of (pure) states considered by Padmanabhan \cite{padmanabhan2017atoms} $\sigma^{\mu\nu}$ is a small correction originating from quantum gravity (as diffusion is a small effect compared to the overall flow of a fluid) and $l^\mu=\la n^\mu\ra  $ is a null vector depending on the single quantum degree of freedom at $x$ through the specific state $P(n(x)^\mu,x)$. Therefore a variation of the physical system with respect to $P(n(x)^\mu,x)$ is equivalent to a variation of the null vector $l^\mu$.
In Padmanabhan's approach, requiring extremization of total entropy (of spacetime and matter degree of freedom) with respect to $P(n(x)^\mu,x)$ hence leads to 
\begin{equation}\label{eq:paddysdream}
\delta [(R_{\mu\nu}-\kappa T_{\mu\nu}) \la n^\mu n^\nu\ra] \approx  \delta [(R_{\mu\nu}-\kappa T_{\mu\nu})  l^\mu l^\nu] =0,
\end{equation}
subject to the constraint $l^\mu l_\mu=0$. Where we used that $\delta P \sim \delta l^\mu$.
To leading order one thus gets that
\begin{equation} \label{eq:paddysmistake}
    (R_{\mu\nu}-\kappa T_{\mu\nu}) \la n^\mu n^\nu\ra   \approx (R_{\mu\nu}-\kappa T_{\mu\nu})\la n^\mu\ra \la n^\nu\ra =(R_{\mu\nu}-\kappa T_{\mu\nu}) l^\mu l^\nu =  0 .
\end{equation}
has to hold for all states $P$ and hence for all null vectors $l^\mu$. Demanding again, that the stress-energy tensor be divergence free $\nabla^\mu T_{\mu\nu}=0$, this leads to the Einstein field equations with a cosmological constant as a constant of integration, along the lines of the original argument by Jacobson.

As we shall see in the following, extracting expectation values of mixed states strongly impacts the argument just mentioned. Indeed, it is a well-known fact in Lorentzian geometry that the average of any two future (past) directed light-like vectors can only be null if the vectors are linearly dependent. This is the starting point for our results below. 

\section{A Probability Measure on the Light Cone} \label{sec:probonlightcone}
In this section we will assume two things: 
\begin{enumerate}
    \item The configuration space for the internal degrees of freedom for gravitation is the PLCB. 
    \item In analogy to Boltzmann's phase-space density $f(x,p^\mu)$ for fluids, we will assume a field of invariant Borel measures $dP(x,n^\mu)$ on the PLCB. If, in a fiber $\Lcal$ at a point $x$, the measure is absolutely continuous with respect to the Lebesgue measure of the tetrad coordinates, then it can be written in terms of the coordinates introduced above as $P(\lambda,\kbar)d\lambda d\Omega$, where $P(\lambda,\kbar)\geq0$ is nonnegative. 
\end{enumerate} 

\begin{Remark}
    In an abuse of notation, we will also write Dirac measures in terms of $P(\lambda, \kbar)$ with Dirac distributions. 
\end{Remark}

The goal of this section is to establish the properties of the statistical quantities required for Padmanabhan's formalism, in the context of mixed states. 
 For now, we ignore the spatial dependence and consider $dP(n^\mu)$ to be a probability measure on the past light cone at a point. 
This means that we assume 
\begin{equation}\label{eq:probability}
    \int_{R_+\times S^2}  P(\lambda, \bar k) d\lambda d\Omega=1 ,
\end{equation}
and for the variance to be finite  
\begin{equation}
    \int_{R_+\times S^2}  \lambda^2 P(\lambda, \bar k) d\lambda d\Omega< \infty.
\end{equation}
As a warmup, we show the following lemma which is a well-known fact in Lorentzian geometry.
\begin{Lemma}\label{lem:timelike}
   The vector $\la n^\mu\ra_\Lcal$\footnote{It is tempting to associate $-t^\mu=-\la n^\mu\ra_\Lcal$ with the flow of time, given it also shows up in the ``classical'' thermodynamic arguments \cite{rovelli2019travel}. Therefore, it would be tempting to identify $-t^\mu$ as the arrow of time. } is time-like except in the case where $P(\lambda, \bar k)=P(\lambda)\delta (\bar k -\bar k_0)$.
\end{Lemma}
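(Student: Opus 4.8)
The plan is to reduce the claim to the equality case of the triangle inequality for vector-valued integrals. First I would decompose the average along the tetrad: setting
\[
  a_0 := \la \lambda\ra_\Lcal = \int_{\R_+\times S^2}\lambda\,P(\lambda,\bar k)\,d\lambda\,d\Omega,
  \qquad
  \bar a := \int_{\R_+\times S^2}\lambda\,\bar k\,P(\lambda,\bar k)\,d\lambda\,d\Omega\in\R^3,
\]
the representation \eqref{eq:nullvectors} gives $\la n^\mu\ra_\Lcal = a_0\,e_0^\mu + \bar a\cdot\bar e^\mu$. Since $\lambda>0$ and $P\ge 0$ with $\int P\,d\lambda\,d\Omega = 1$, we have $a_0>0$, while $a_0<\infty$ follows from Cauchy--Schwarz, $\la\lambda\ra_\Lcal\le\la\lambda^2\ra_\Lcal^{1/2}$, together with the finite-variance assumption. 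A vector of the form $v_0\,e_0^\mu+\bar v\cdot\bar e^\mu$ is timelike precisely when $|\bar v|<|v_0|$ (orthonormality of the tetrad), so the lemma is equivalent to the assertion that $|\bar a|<a_0$ unless $P(\lambda,\bar k)=P(\lambda)\,\delta(\bar k-\bar k_0)$.

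Next I would renormalise and invoke convexity. The expression $d\tilde P := a_0^{-1}\lambda\,P(\lambda,\bar k)\,d\lambda\,d\Omega$ is a genuine probability measure on $\R_+\times S^2$, and $a_0^{-1}\bar a = \int\bar k\,d\tilde P$ is the $\tilde P$-mean of the $S^2$-valued map $(\lambda,\bar k)\mapsto\bar k$. Because $|\bar k|\equiv 1$ on the domain, the triangle inequality for integrals gives $|a_0^{-1}\bar a|=\bigl|\int\bar k\,d\tilde P\bigr|\le\int|\bar k|\,d\tilde P=1$, i.e.\ $|\bar a|\le a_0$. Thus $\la n^\mu\ra_\Lcal$ is always causal (and past-pointing, since $a_0>0$), and the only thing left is to characterise the borderline.

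The crux is the equality case, and I expect it to be the only genuine subtlety. Equality $\bigl|\int\bar k\,d\tilde P\bigr|=\int|\bar k|\,d\tilde P$ forces $\bar k$ to align $\tilde P$-almost everywhere with the fixed unit vector $\bar k_0:=\bar a/|\bar a|$: projecting onto $\bar k_0$ one gets $\int(1-\bar k\cdot\bar k_0)\,d\tilde P=0$ with a nonnegative integrand, hence $\bar k\cdot\bar k_0=1$ $\tilde P$-a.e., and since $|\bar k-\bar k_0|^2=2(1-\bar k\cdot\bar k_0)$ this yields $\bar k=\bar k_0$ $\tilde P$-a.e., hence $P$-a.e. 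Translating back, $P$ is supported on the ray $\{\bar k=\bar k_0\}$, i.e.\ $dP=P(\lambda)\,\delta(\bar k-\bar k_0)\,d\lambda\,d\Omega$ — exactly the excluded case; conversely, for such $P$ one checks directly that $|\bar a|=a_0$ and $\la n^\mu\ra_\Lcal$ is null. Finally I would note that, although the statement is phrased in terms of a density $P$, the argument is insensitive to absolute continuity and goes through verbatim for a general invariant Borel measure $dP(n^\mu)$ on the fiber $\Lcal$.
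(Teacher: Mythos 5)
Your proof is correct and follows essentially the same route as the paper's: decompose $\la n^\mu\ra_\Lcal$ along the tetrad, observe that timelikeness reduces to $|\bar a|<a_0$, and apply the triangle inequality to the integral of the unit vector $\bar k$ against the reweighted probability measure $a_0^{-1}\lambda\,P\,d\lambda\,d\Omega$. The one place you go beyond the paper is the equality case, which the paper merely asserts is ``strict unless $P(\lambda,\bar k)=P(\lambda)\delta(\bar k-\bar k_0)$'' while you actually prove it by projecting onto $\bar k_0$ and using $|\bar k-\bar k_0|^2=2(1-\bar k\cdot\bar k_0)$ --- a worthwhile addition.
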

\begin{proof}
For $P(\lambda, \bar k)=P(\lambda)\delta (\bar k -\bar k_0)$ it is clear that 
\begin{equation}
    \la n^\mu\ra_\Lcal= \int_{R_+} \lambda (e_0^\mu + \bar k_0 \cdot \bar e ^\mu) P(\lambda) d\lambda = (e_0^\mu + \bar k_0 \cdot \bar e ^\mu)\int_{R_+} \lambda P(\lambda) d\lambda.
\end{equation}
and therefore $\la n^\mu\ra_\Lcal$ is still a null vector.  Now we treat the case of a general $P(\lambda, \bar k)$.
For that, it is convenient to introduce the marginal probability density
\begin{equation}
    P(\lambda) = \int_{S^2} P(\lambda, \kbar) d\Omega.
\end{equation}
$P(\kbar)$ is then defined analogously. In addition, it is convenient to define 
\begin{equation}
    \lambda_{\text{avg}}= \int_{ R_+} \lambda P(\lambda) d \lambda.
\end{equation}
This allows us to write the expectation value over the light cone as
\begin{equation}
    \la n^\mu\ra_\Lcal= \lambda_{\text{avg}} \left(e_0^\mu + \int_{R_+ \times S^2} \frac{\lambda P(\lambda,\bar k) \bar k}{\lambda_{\text{avg}}} d\Omega d\lambda  \cdot \bar e ^\mu \right)
\end{equation}
Now it is clear that $\int_{ S^2} P(\lambda, \bar k) \bar k d\Omega$ is a convex combination over $S^2$ for more than one $\lambda$  and therefore that the inequality 
\begin{align}
    \left|\int_{\R_+\times S^2} \frac{\lambda P(\lambda,\bar k) \bar k}{\lambda_{\text{avg}}} d\Omega d\lambda \right|&\leq  \int_{\R_+} \frac{\lambda }{\lambda_{\text{avg}}} \left|\int_{S^2} P(\lambda,\bar k) \bar k d\Omega \right|d\lambda \\
    & \leq \int_{\R_+} \frac{\lambda }{\lambda_{\text{avg}}} \left(\int_{ S^2} P(\lambda, \bar k) |\bar k|d\Omega\right)d \lambda\\
    &=\int_{\R_+} \frac{\lambda }{\lambda_{\text{avg}}} P(\lambda)d\lambda =1,
\end{align}
where $|\cdot|$ is the Euclidean scalar product, in three dimensions is strict unless $P(\lambda, \kbar)$ is of the form $P(\lambda, \bar k)=P(\lambda)\delta (\bar k -\bar k_0)$. This finishes the argument.
\end{proof}
Note, that the exception of course includes the pure states considered by Padmanabhan with $P(\lambda, \kbar)=\delta(\lambda-\lambda_0)\delta(\kbar-\kbar_0)$. The following corollary will be useful for further calculations.  
\begin{Corollary} \label{cor:convenient}
For a probability measure over the light cone $dP(n^\mu)$ that is not of the form $P(\lambda, \bar k)=P(\lambda)\delta (\bar k -\bar k_0)$, we can choose an orthonormal basis $e_0^\mu,e_1^\mu,e_2^\mu,e_3^\mu $, that is, a coordinate system such that 
\begin{equation}
    \int_{ \R_+ \times S^2} \frac{\lambda P(\lambda, \bar k) \bar k}{\lambda_{avg}} d \lambda d\Omega=0
\end{equation}
and thus 
\begin{equation}
    \la n^\mu \ra_{\Lcal(P)}=\lambda_{\text{avg}} e_0^\mu.
\end{equation}
\end{Corollary}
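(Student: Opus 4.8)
The plan is to read off the direction of the averaged vector from Lemma~\ref{lem:timelike} and then align the time-like leg of the tetrad with it. Write $v^\mu:=\la n^\mu\ra_{\Lcal(P)}$; this is a well-defined element of $T_xM$ attached to the measure $dP(n^\mu)$ alone, independent of any tetrad (the integral converging since $\int\lambda^2 P\,d\lambda\,d\Omega<\infty$ forces $\int\lambda P\,d\lambda\,d\Omega<\infty$ on a probability space). Because every null vector in the support of $dP$ is past directed and $dP$ is a probability measure, $v^\mu$ lies in the closed convex cone of past-directed causal vectors; moreover $v^\mu\neq 0$, since in any admissible tetrad its $e_0^\mu$-component equals $\lambda_{\mathrm{avg}}=\int_{\R_+}\lambda P(\lambda)\,d\lambda>0$ (strictly positive because $\lambda>0$ on $\R_+$ and $dP$ has unit mass; the value $\lambda=0$ would give the zero vector, which is not in $\Lcal$). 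Since the hypothesis on $P$ excludes the exceptional case of Lemma~\ref{lem:timelike}, $v^\mu$ is not null, hence it is a nonzero past-directed time-like vector and $g(v,v)>0$.

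Next I would choose the frame adapted to $v^\mu$. Put $e_0^\mu:=v^\mu/\sqrt{g(v,v)}$, a unit past-directed time-like vector, and complete it to an orthonormal tetrad $e_0^\mu,e_1^\mu,e_2^\mu,e_3^\mu$ by Gram--Schmidt (equivalently, any two unit past-directed time-like vectors are related by a Lorentz transformation). Relative to this tetrad every past-directed null vector is of the form \eqref{eq:nullvectors}, and $dP$ is represented, where absolutely continuous, by a nonnegative density $P(\lambda,\kbar)$ of the admissible type. The decomposition of the fixed vector $v^\mu$ in this basis is then exactly
\begin{equation*}
  \la n^\mu\ra_{\Lcal(P)}=\lambda_{\mathrm{avg}}\left(e_0^\mu+\int_{\R_+\times S^2}\frac{\lambda P(\lambda,\kbar)\,\kbar}{\lambda_{\mathrm{avg}}}\,d\lambda\,d\Omega\cdot\bar e^\mu\right),\qquad \lambda_{\mathrm{avg}}>0 ,
\end{equation*}
so this identity holds verbatim in the new coordinates.

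Finally I would match components. By construction $v^\mu=\sqrt{g(v,v)}\,e_0^\mu$ has vanishing $e_1^\mu,e_2^\mu,e_3^\mu$ components; comparing with the displayed formula and using that $e_1^\mu,e_2^\mu,e_3^\mu$ are linearly independent together with $\lambda_{\mathrm{avg}}>0$ yields $\int_{\R_+\times S^2}\lambda P(\lambda,\kbar)\,\kbar/\lambda_{\mathrm{avg}}\,d\lambda\,d\Omega=0$, and therefore $\la n^\mu\ra_{\Lcal(P)}=\lambda_{\mathrm{avg}}\,e_0^\mu$, which is the claim. I do not expect a genuine obstacle here: the only point needing a little care is the bookkeeping under the change of frame --- that $\lambda_{\mathrm{avg}}$ recomputed in the new tetrad remains finite and strictly positive and that the average formula keeps its shape --- and this is immediate once one notes that $v^\mu$ is frame-independent, so the whole argument reduces to expressing a single time-like vector in an orthonormal basis adapted to it.
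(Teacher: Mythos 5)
Your proof is correct and follows exactly the route the paper intends: the corollary is stated without proof as an immediate consequence of Lemma~\ref{lem:timelike}, the implicit argument being precisely to align $e_0^\mu$ with the (nonzero, past-directed, time-like) average vector and complete to an orthonormal tetrad. You simply fill in the details --- integrability via Cauchy--Schwarz, strict positivity of $\lambda_{\mathrm{avg}}$, and the frame-independence of $\la n^\mu\ra_{\Lcal(P)}$ --- all of which are handled correctly.
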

In the next step, we will show that every state that is not of the form $P(\lambda, \bar k)=P(\lambda)\delta (\bar k -\bar k_0)$ can be normalized. 
\begin{Lemma}
    Let $dP(n^\mu)$ be a probability measure that is not of the form $P(\lambda, \bar k)=P(\lambda)\delta (\bar k -\bar k_0)$, and $e_0^\mu,e_1^\mu,e_2^\mu,e_3^\mu  $ a tetrad choosen according to Corollary \ref{cor:convenient}, then we can always find a related probability measure $d\Tilde{P}(n^\mu):= dP(\alpha n^\mu)$ such that $\la n^\mu\ra_\Lcal$ is a unit timelike vector.
\end{Lemma}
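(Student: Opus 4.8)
The plan is to use the linearity of the map $P \mapsto \la n^\mu\ra_\Lcal$ under dilations of the light cone, so that one scalar rescaling of the fibre coordinate is enough to bring $\la n^\mu\ra_\Lcal$ to unit length. First I would fix once and for all a tetrad $e_0^\mu,e_1^\mu,e_2^\mu,e_3^\mu$ as supplied by Corollary~\ref{cor:convenient} (available precisely because $P$ is, by hypothesis, not of the form $P(\lambda)\delta(\kbar-\kbar_0)$), so that in these coordinates
\[
\la n^\mu\ra_{\Lcal(P)} = \lambda_{\text{avg}}\, e_0^\mu, \qquad \lambda_{\text{avg}} = \int_{\R_+} \lambda\, P(\lambda)\, d\lambda .
\]
The next step is to observe that $\lambda_{\text{avg}}\in(0,\infty)$: it is strictly positive because $P\ge 0$ is a probability density and the fibre $\Lcal$ carries no mass at $\lambda=0$, that value corresponding to the excluded vector $n^\mu=0$; and it is finite by Cauchy--Schwarz, $\lambda_{\text{avg}}=\int\lambda\,dP \le \big(\int\lambda^2\,dP\big)^{1/2}$ using $\int dP = 1$, both factors being finite by \eqref{eq:probability} and the finite-variance assumption. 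Hence $\alpha := \lambda_{\text{avg}}$ is an honest positive real number, and this is the rescaling I would use.

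Next I would make precise the object $d\tilde P(n^\mu):=dP(\alpha n^\mu)$. For $\alpha>0$ the dilation $\Phi_\alpha : n^\mu\mapsto\alpha n^\mu$ is a diffeomorphism of the fibre $\Lcal$ onto itself, since it sends past-directed null vectors to past-directed null vectors; so I would \emph{define} $\tilde P$ to be the pushforward of $P$ along $\Phi_\alpha^{-1}$, that is, $\int f\,d\tilde P = \int f(\alpha^{-1}n)\,dP(n)$ for bounded Borel $f$. In the coordinates of \eqref{eq:nullvectors} this is just the density $\tilde P(\lambda,\kbar)=\alpha\,P(\alpha\lambda,\kbar)$, the factor $\alpha$ being the one-dimensional Jacobian of $\lambda\mapsto\alpha\lambda$. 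Because $\Phi_\alpha$ maps $\Lcal$ onto $\Lcal$, the total mass is unchanged, so $\tilde P$ is again a probability measure; its $\lambda^2$-moment equals $\alpha^{-2}\int\lambda^2\,dP<\infty$, so the variance stays finite; and $\tilde P$ is clearly not of the exceptional form $P(\lambda)\delta(\kbar-\kbar_0)$, since $P$ is not.

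Finally, by linearity of the integral under $\Phi_\alpha^{-1}$,
\[
\la n^\mu\ra_{\Lcal(\tilde P)} = \int n^\mu\, d\tilde P(n) = \alpha^{-1}\int n^\mu\, dP(n) = \alpha^{-1}\lambda_{\text{avg}}\, e_0^\mu = e_0^\mu,
\]
which (in our signature) satisfies $g(e_0,e_0)=\pm 1$, so it is a unit timelike vector, as claimed. The argument has no genuinely hard step: the only points that deserve care are (i) verifying $\lambda_{\text{avg}}\in(0,\infty)$, so that the scalar $\alpha$ is well defined and positive --- this is where the exclusion of the zero vector and the finite-variance hypothesis enter --- and (ii) checking that the rescaled object $d\tilde P$ remains a probability measure, which reduces to the scale invariance $\Phi_\alpha(\Lcal)=\Lcal$ of the light cone. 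I expect this modest bookkeeping --- in particular being consistent about whether $dP(\alpha n^\mu)$ is read as a pushforward or a pullback --- to be the main, and only mild, obstacle.
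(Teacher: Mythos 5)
Your proof is correct and follows essentially the same route as the paper's: rescale the fibre coordinate $\lambda$ by a constant fixed by $\lambda_{\text{avg}}$ and use the linearity of the first moment under dilation. The one substantive difference is the normalization convention: you define $d\tilde P$ as a genuine pushforward, $\tilde P(\lambda,\kbar)=\alpha\,P(\alpha\lambda,\kbar)$, which keeps $\int d\tilde P=1$ and forces $\alpha=\lambda_{\text{avg}}$; the paper instead takes $\tilde P(\lambda,\kbar)=P(\alpha\lambda,\kbar)$ without the Jacobian and sets $\alpha^2=\lambda_{\text{avg}}$, at the price that its $\tilde P$ then has total mass $\alpha^{-1}=\lambda_{\text{avg}}^{-1/2}$ and so is no longer literally a probability measure as the lemma's statement requires. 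Your version is the more careful one on exactly the point you flagged (pushforward versus pullback), and your additional checks that $\lambda_{\text{avg}}\in(0,\infty)$ --- positivity from the exclusion of $\lambda=0$, finiteness via Cauchy--Schwarz from the finite-variance assumption --- are worthwhile and absent from the paper.
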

Let $\Tilde{P}(\lambda, \bar k)= P(\alpha\lambda, \bar k)$ with $\alpha>0$ then we have 
\begin{equation}
    \int_{R_+\times S^2} \lambda (e_0^\mu + \bar k \cdot \bar e ^\mu) \Tilde{P}(\lambda, \bar k) d\lambda d\Omega = \int_{R_+\times S^2} \lambda (e_0^\mu + \bar k \cdot \bar e ^\mu) P(\alpha\lambda, \bar k) d\lambda d\Omega \phantom{\frac{1}{\alpha^2}}
\end{equation}
Replacing $\frac{\Tilde \lambda}{\alpha} =  \lambda$ and $\frac{d\Tilde \lambda}{\alpha}= d\lambda$ we get
\begin{align}
    \phantom{\int_{R_+\times S^2} \lambda (e_0^\mu + \bar k \cdot \bar e ^\mu) \Tilde{P}(\lambda, \bar k) d\lambda d\Omega} & = \frac{1}{\alpha^2}\int_{R_+\times S^2} \Tilde\lambda (e_0^\mu + \bar k \cdot \bar e ^\mu) P(\Tilde\lambda, \bar k) d\Tilde\lambda d\Omega \\
    &= \frac{\lambda_{\text{avg}}}{\alpha^2}e_0^\mu
\end{align}
Setting $\alpha^2=\lambda_{\text{avg}}$ gives us 
\begin{equation}
    \la n^\mu\ra_{\Lcal(\Tilde{P})}= e_0^\mu.
\end{equation}
This finished the argument. 
\begin{Def}
 We will call a probability measure $d\Tilde{P}$  with $\la n^\mu\ra_{\Lcal(P)}= e_0^\mu$ a normalized state.   
\end{Def}
 The above lemma gives us a relation between the normalization of the resulting timelike vector field and the change in the underlying probability measure.

We are now ready to study the variance of a probability measure on the light cone. For that, we need the following definition.
\begin{Def}
    We call $P(\lambda,\bar k)$ degenerate of order $n$ if there exist $n$ linearly independent vectors $k_i\in S^2$ such that 
    \begin{equation}
        supp (P(\lambda, \bar k))\subset span\{k_i| i\leq n \in \N\}^\perp
    \end{equation}
    or if there are $n-1$ such vectors and \begin{equation}
        P(\lambda) = \delta (\lambda-\lambda_0).
    \end{equation}
\end{Def}

\begin{Thm}
    For any non-degenerate $P(\lambda,\bar k)$ 
    \begin{equation}
        \sigma(P)^{\mu\nu} = \la n^\mu n^\nu\ra_{\Lcal(P)} - \la n^\mu\ra_{\Lcal(P)} \la n^\nu\ra_{\Lcal(P)}
    \end{equation}
    is strictly positive.
\end{Thm}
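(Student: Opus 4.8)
The plan is to recognise $\sigma(P)^{\mu\nu}$ as the covariance of the $T_xM$-valued random variable $n^\mu$, reduce strict positivity to a statement about the affine span of the support of $dP$, and then control that span by a short case analysis.

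First I would record that $\sigma(P)$ is automatically positive \emph{semi}-definite: for every covector $\xi_\mu$,
\begin{equation}
  \sigma(P)^{\mu\nu}\xi_\mu\xi_\nu \;=\; \la (\xi_\mu n^\mu)^2\ra_{\Lcal(P)} - \la \xi_\mu n^\mu\ra_{\Lcal(P)}^{2} \;=\; \mathrm{Var}_{\Lcal(P)}\!\big(\xi_\mu n^\mu\big) \;\ge\; 0 ,
\end{equation}
being the variance of the real random variable $n\mapsto\xi_\mu n^\mu$ under $dP$. Hence it suffices to show this variance is \emph{strictly} positive whenever $\xi_\mu\neq 0$; equivalently, that $n\mapsto\xi_\mu n^\mu$ is never $P$-almost-everywhere constant, i.e.\ that $\supp(dP)$ is contained in no affine hyperplane of $T_xM$.

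I would then argue by contraposition: assume $\xi_\mu n^\mu=c$ holds $P$-a.e.\ for some $\xi_\mu\neq 0$ and $c\in\R$, and aim to conclude $P$ is degenerate. Since a non-degenerate $P$ is not of the form $P(\lambda)\delta(\bar k-\bar k_0)$, Corollary~\ref{cor:convenient} supplies a tetrad with $\la n^\mu\ra_{\Lcal(P)}=\lambda_{\text{avg}}e_0^\mu$; setting $\xi_0=\xi_\mu e_0^\mu$ and $\bar\xi=(\xi_\mu e_1^\mu,\xi_\mu e_2^\mu,\xi_\mu e_3^\mu)$, the hypothesis becomes $\lambda(\xi_0+\bar\xi\cdot\bar k)=c$ on $\supp(dP)$, while averaging it gives $c=\xi_\mu\la n^\mu\ra=\lambda_{\text{avg}}\xi_0$. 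Three cases. (i) If $c=0$ then $\xi_0=0$ (as $\lambda_{\text{avg}}>0$), so $\bar\xi\neq0$ and $\bar\xi\cdot\bar k=0$ on $\supp(dP)$; thus the $\bar k$-support lies in the great circle $\{\bar\xi\}^{\perp}\cap S^2\subset\mathrm{span}\{\bar\xi/|\bar\xi|\}^{\perp}$ and $P$ is degenerate of order $1$. (ii) If $c\neq0$ and $\bar\xi=0$ then $\lambda\equiv c/\xi_0=\lambda_{\text{avg}}$ on the support, i.e.\ $P(\lambda)=\delta(\lambda-\lambda_{\text{avg}})$, again degenerate. (iii) If $c\neq0$ and $\bar\xi\neq0$, then $\lambda=c/(\xi_0+\bar\xi\cdot\bar k)$ is a fixed, non-constant function of $\bar k$ on the support, and one must extract degeneracy from this together with the centering relation $\int \lambda\,\bar k\,dP=0$ of Corollary~\ref{cor:convenient}, the normalisation, and the finite-variance assumption $\int\lambda^2\,dP<\infty$.

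I expect case (iii) to be the main obstacle. It is precisely the configuration in which the candidate null covector $\xi$ fails to be orthogonal to the mean $\la n^\mu\ra$, so that $\supp(dP)$ is pinned to a \emph{non-central} slice of the light cone by an affine hyperplane --- a topological $2$-sphere (for timelike $\xi$) or an unbounded paraboloid --- rather than to a linear subspace; closing the argument there likely requires using the non-degeneracy hypothesis in full, and perhaps also the absolute continuity of $P$ (or a correspondingly sharp reading of ``degenerate''), since an arbitrary Borel measure concentrated on such a slice could otherwise meet all the stated hypotheses. Cases (i) and (ii), by contrast, are in essence the Lorentzian-geometry facts already exploited in Lemma~\ref{lem:timelike} and Corollary~\ref{cor:convenient}.
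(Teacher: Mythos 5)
Your reduction of the claim to strict positivity of $\mathrm{Var}_{\Lcal(P)}(\xi_\mu n^\mu)$ for every $\xi_\mu\neq 0$ is the correct formulation of positive definiteness, and your cases (i) and (ii) are handled properly. The genuine gap is case (iii), which you flag but do not close --- and it cannot be closed with the paper's definition of degeneracy. Concretely, take $0<a<1$, $c>0$ and
\begin{equation}
P(\lambda,\bar k)\;=\;\frac{1}{4\pi}\,\delta\!\left(\lambda-\frac{c}{1+a\,k_1}\right).
\end{equation}
This is a probability measure with finite second moment; its $\bar k$-marginal has full support on $S^2$, so its support lies in no set of the form $\mathrm{span}\{k_i\}^\perp$, and its $\lambda$-marginal is spread over the interval $[c/(1+a),\,c/(1-a)]$, so it is not $\delta(\lambda-\lambda_0)$. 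Hence it is non-degenerate in the paper's sense. Yet the covector $\xi_\mu=e^0_\mu+a\,e^1_\mu$ satisfies $\xi_\mu n^\mu=\lambda(1+a k_1)=c$ everywhere on the support, so $\sigma(P)^{\mu\nu}\xi_\mu\xi_\nu=0$. Your case (iii) is therefore not merely the hard case: it is exactly where the statement, as written, fails, because the definition of degeneracy only excludes supports lying in \emph{linear} hyperplanes (and delta $\lambda$-marginals), not supports lying in non-central affine slices of the cone. Any completion of your argument would require strengthening the non-degeneracy hypothesis to rule out such slices.

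For comparison, the paper's own proof takes a different and weaker route: in the tetrad of Corollary \ref{cor:convenient} it verifies only that the four diagonal entries $\sigma(P)^{\mu\nu}e^a_\mu e^a_\nu$, $a=0,\dots,3$, are positive --- for $a=1,2,3$ directly from non-degeneracy, and for $a=0$ via the symmetrization lemma in the appendix. Positivity of the diagonal entries in one basis does not imply positive definiteness of a symmetric tensor, so the paper's argument never meets your case (iii) but also does not establish the stated conclusion. Your quadratic-form approach is the right test of the theorem, and the obstruction you ran into is diagnostic of a defect in the statement (or in the definition of degeneracy) rather than of a missing trick on your part.
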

\begin{proof}
We choose an orthonormal basis according to Corollary \ref{cor:convenient}.
Then we have the corresponding co-tangent basis $e^0_\mu,e^1_\mu,e^2_\mu,e^3_\mu$. Due to orthogonality for $i=\{1,2,3\}$ we have 
\begin{align}
    \left( \la n^\mu n^\nu\ra_{\Lcal(P)} \right.&- \left.\la n^\mu\ra_{\Lcal(P)} \la n^\nu\ra_{\Lcal(P)} \right) e^i_\mu e^i_\nu= \la n^\mu n^\nu\ra_{\Lcal(P)} e^i_\mu e^i_\nu \\
&= \int_{R_+\times S^2} \lambda^2 (e_0^\mu + \bar k \cdot \bar e ^\mu) (e_0^\nu + \bar k \cdot \bar e ^\nu) P(\lambda, \bar k) d\lambda d\Omega e^i_\mu e^i_\nu \\
&= \int_{R_+\times S^2} \lambda^2  ( k_i)^2 P(\lambda, \bar k) d\lambda d\Omega
\end{align}
This is positive by our assumption, that $P(\lambda, \bar k)$ be non-degenerate.
For $e_0$ we need a different argument. We calculate
\begin{align}
   \left( \la n^\mu n^\nu\ra_{\Lcal(P)} \right.&- \left.\la n^\mu\ra_{\Lcal(P)} \la n^\nu\ra_{\Lcal(P)} \right) e^0_\mu e^0_\nu=\int_{R_+}\lambda^2 P(\lambda)d\lambda-\left(\int_{R_+}\lambda P(\lambda)d\lambda\right)^2 \nonumber\\
   &= \int_{R_+} P(\kappa)d\kappa \int_{R_+}\lambda^2 P(\lambda)d\lambda - \left(\int_{R_+}\lambda P(\lambda)d\lambda\right)\left(\int_{R_+}\kappa P(\kappa)d\kappa\right)\nonumber \\
   &=\int_{R_+\times R_+}  \lambda (\lambda-\kappa) P(\lambda)P(\kappa) d\lambda d \kappa  \geq0
\end{align}
The positivity follows from Lemma \ref{lem:positivity} in Appendix \ref{sec:lemma} due to the fact that $P(\lambda)P(\kappa)$ is positive and symmetric. This concludes the proof.
\end{proof}


\section{Some Observations} \label{sec:physics}
In the following we apply the results of the previous section to the variation of entropy. Clearly, in general we can not apply straightforwardly Padmanabhan's extremization of entropy setting the variation of the state $P$ equal with the variation of the null vector $l^a$. Instead we always have to carry the variation with respect to the state $P$ along which leads to
\begin{align} \label{eq:variationP}
     0=\delta [(R_{\mu\nu} - \kappa T_{\mu\nu}) \langle n^\mu n^\nu \rangle] &=
\delta [(R_{\mu\nu} - \kappa T_{\mu\nu}) (\langle n^\mu \rangle \langle n^\nu \rangle + \sigma(P)^{\mu\nu}] \\
     & \Updownarrow \nonumber\\ \label{eq:interesting}
    \delta [(R_{\mu\nu} - \kappa T_{\mu\nu}) \sigma(P)^{\mu\nu}] &= -\delta[(R_{\mu\nu} - \kappa T_{\mu\nu}) \langle n^\mu \rangle \langle n^\nu \rangle]
\end{align}
at equilibrium. We now restrict to variations such that 
\begin{equation}
\delta[(R_{\mu\nu} - \kappa T_{\mu\nu}) \langle n^\mu \rangle \langle n^\nu \rangle] = (R_{\mu\nu} - \kappa T_{\mu\nu}) \delta[\langle n^\mu \rangle \langle n^\nu \rangle]    
\end{equation}
under the constraint $\langle n_\nu \rangle \langle n^\nu \rangle =-C$. Following \cite{pesci2019spacetime} this implies that the Einstein equation hold 
\begin{equation}
     R_{\mu\nu}-\frac{1}{2}R+\Lambda g_{\mu\nu}=  \kappa T_{\mu\nu}
\end{equation}
with the cosmological constant originating from the Lagrange multiplier. If we then formally integrate equation \eqref{eq:interesting} we get
\begin{equation}
    (R_{\mu\nu}-\kappa T_{\mu\nu}) \sigma(P)^{\mu\nu} = - \lambda_{\text{avg}}^2 (R_{\mu\nu}-\kappa T_{\mu\nu}) e_0^\mu e_0^\nu +G 
\end{equation}
with  $G=(R_{\mu\nu} - \kappa T_{\mu\nu}) \langle n^\mu n^\nu \rangle$ following from  \eqref{eq:variationP}, a constant independent of $P$. 
We now make use of the fact that the Einstein field equations hold for this system (with the trace term either on the geometric side or on the matter side) 
\begin{align}
    \label{eq:einstein2}
    R_{\mu\nu}-\kappa T_{\mu\nu}&= \left(\Lambda-\frac{\kappa}{2}T\right)g_{\mu\nu}\\
    R_{\mu\nu}-\kappa T_{\mu\nu}&= \left(\frac{1}{2}R-\Lambda\right)g_{\mu\nu}
\end{align}
to replace the relevant term on the right-hand side of \eqref{eq:interesting} we get the following intriguing result
\begin{equation} \label{eq:money}
    (R_{\mu\nu}-\kappa T_{\mu\nu}) \frac{\sigma(P)^{\mu\nu}}{\lambda_{\text{avg}}^2} = \Lambda-\frac{\kappa}{2}T + \left(\Lambda-\frac{\kappa}{2}T\right)g_{\mu\nu}\langle n^\mu n^\nu \rangle=\Lambda-\frac{\kappa}{2}T= \frac{1}{2}R-\Lambda
\end{equation}
where we used $g_{\mu\nu}\langle n^\mu n^\nu \rangle= \langle g_{\mu\nu} n^\mu n^\nu \rangle=0$ in the second step. Similarly we get
\begin{equation}\label{eq:moneynormalized}
    (R_{\mu\nu}-\kappa T_{\mu\nu})\sigma(\Tilde P)^{\mu\nu} = \Lambda-\frac{\kappa}{2}T =\frac{1}{2}R-\Lambda
\end{equation}
for normalized states. At the present state of our understanding we do not want to interpret too much into this relation but report it as something reassuring and promising. 
Nevertheless, it is tempting that the cosmological constant shows up either way, especially in light of the fact that so much of Padmanabhan's work was geared towards explaining the cosmological constant. 

As a final remark, it is interesting to observe that a mixed state gives rise to a vector field $t^\mu= \la n^\mu\ra_\mathcal{L}$ relevant for the computation similar to the regularizing vector fields that appear in Causal Fermion Systems in the context of baryogenesis \cite{baryogenesis}. In light of the results in \cite{jacobson2008einstein}, it is interesting to note that this vector field seems to have no dynamical relevance, but is more of a sort of a bookkeeping device.  In accordance with classical thermodynamic treatments, there is a plethora of states $P(\lambda, \kbar)$ that lead to the same right-hand side in \eqref{eq:moneynormalized}. However, as a bookkeeping device, it might be relevant in the context of Padmanabhan's earlier considerations.  In \cite{parattu2013structure, padmanabhan2009entropy} he derived GR in terms of thermodynamically conjugate variables, and in \cite{padmanabhan2012emergence,padmanabhan2014general,padmanabhan2010surface,padmanabhan2010equipartition,padmanabhan2004entropy} he derived what he calls the ``holographic equipartition'' for stationary spacetimes. All of these considerations feature the choice of a vector field. In Appendix \ref{sec:regularization} we calculated this vector field for several forms of $P(\lambda,\kbar)$ that are relevant in Causal Fermion Systems, and it turns out that $\lambda_{\text{avg}}$ is proportional to the inverse of the regularization length $\varepsilon$, which can be thought to be of the order of the Planck length. 

\section{Modified Measure Theories from a Density of States}\label{sec:mmt}
The Hawking-King-McCarthy and Malament theorem \cite{hawking1976new,malament1977class} states that under very weak causality conditions the causal order $(M, \prec)$ determines the metric $(M, g)$ up to a conformal factor. Formally, this implies that $(M, g)$ is equivalent to $(M, \prec)$ plus a volume form $\Phi$, or colloquially: spacetime is equal to causal order plus volume. The attentive reader might notice that, by working on the light cone, we already fixed the causal structure, so the only variable left is the conformal factor of the metric. If we drop the requirement typically invoked in thermodynamic derivations of gravity that the stress-energy tensor of matter needs to be divergence free $\nabla^\mu T_{\mu\nu}=0$, then this allows for MMT. 
 Instead of integrating the Lagrangian against the measure $\sqrt{-g}d^4x$ MMT promote the measure to be an independent quantity. The total action is then given by 
\begin{equation}
    S= \int_M L \Phi(A) d^4x \qquad \text{ with } \qquad L=\frac{-1}{\kappa} R(\Gamma, g) +L_m,
    \label{ActionwithPhi}
\end{equation}
where $\kappa$ is the gravitational coupling constant, the scalar curvature is given in terms of the connection and the metric, and the measure by 
\begin{equation}
    \Phi(A)= \frac{1}{6}\varepsilon^{\alpha\beta\mu\nu}\partial_\alpha A_{\beta\mu\nu},
\end{equation}
where $A_{\beta\mu\nu}$ is the tensor gauge potential of a non-singular exact $4$-form $\omega=dA$. Thus, the non-Riemannian volume element density $\Phi(A)$ is given by the scalar density of the dual field-strength associated with that potential. As a result, $\Phi(A) d^4x$ is invariant under general coordinate transformations. This introduces an extra scalar degree of freedom as can be easily seen by rewriting $\Phi(A)$ as $\chi(x)\sqrt{-g}$ where $\chi(x)=\frac{\Phi(A)}{\sqrt{-g}}$.

In the following, we will show that when working with a measure on the light cone, it is straightforward to implement it in such a way that it naturally leads to MMT. For that we will relax assumption \eqref{eq:probability} and reinstate the dependence of the state $dP(x,n^\mu)$ on the spacetime point $x$. In the following, we assume a global time-like vector field $t^\mu$ such that the condition from Corollary \ref{cor:convenient} is satisfied at every point. We then make the following definition. 
\begin{Def}[Density of gravitational states] Given a measure $dP(x,n^\mu)$ on the PLCB which can be represented in terms of coordinates by $P(x,\lambda,\bar k)d\lambda d\Omega \sqrt{|g|}d^4x$  the density of gravitational states is given by 
    \begin{equation}
        \chi(x)= \int_{R_+ \times S^2} P(x,\lambda,\bar k) d\Omega d \lambda. 
    \end{equation}
\end{Def}
While in the previous discussion we have fixed $\chi(x)=1$ we now allow it to take any positive value. This definition is in perfect analogy with the number density of particles in Boltzmann's treatment of fluids \footnote{   In this case we have $n(x)=\int_{\Omega(p^\mu)} f(x,p^\mu) d\omega$, where $p^\mu$ are the internal degrees of freedom of the microscopic degrees of freedom, $\Omega(p^\mu)$ is the domain of these internal degrees of freedom and $d\omega$ is a suitable measure on $\Omega(p^\mu)$. } and $\chi(x)$ counts the number of gravitational states that contribute to the volume density at a point $x$.

We are now ready to formulate an action on $\mathcal{L}$ as 
\begin{equation}\label{eq:unconstraint}
S[g_{\mu\nu}, L_m, P(x,\lambda, \kbar)]=\int_\mathcal{L} \left(  \frac{-1}{\kappa}R + L_m\right)dP(x,n^\mu)=\int_\mathcal{M} \left\la  \frac{-1}{\kappa}R + L_m\right\ra_\Lcal \sqrt{-g}d^4x
\end{equation}
which is to be minimized. A short calculation 
\begin{align}
 S[g_{\mu\nu}, L_m, P(x,\lambda, \kbar]    &= \int_\mathcal{L} \left(  \frac{-1}{\kappa}R + L_m\right) P(x,\lambda, \kbar) \sqrt{-g}d^4x d\lambda d\Omega\\
    &=\int_M d^4x \sqrt{-g} \left[\left( \frac{-1}{\kappa}R + L_m\right)\int_{R_+ \times S^2} P(x,\lambda,\bar k) d\Omega d \lambda\right]  \\
    &= \int_M \left( \frac{-1}{\kappa}R + L_m\right) \chi(x) \sqrt{-g}d^4x =  \int_M \left( \frac{-1}{\kappa}R + L_m\right)\Phi_A d^4x.  \label{eq:mmtaction}
\end{align}
shows that this gives rise to a spacetime action with a modified measure. 
Now what happens if we minimize \eqref{eq:mmtaction} with respect to  $P(x,\lambda, \kbar)$ (i.e. if we vary $P(x,\lambda, \kbar)$ itself)\footnote{Conceptually a variation of $P(x,\lambda, \kbar)$ represents a variation of the distribution of the momenta of the underlying gravitational states.}? The answer is straightforward
\begin{align}\label{eq:unconstraintvariation}
    \int_M &\left\la  \frac{-1}{\kappa}R + L_m\right\ra_{\Lcal(P+\delta P) }\sqrt{-g}d^4x \\
    &= \int_M d^4x \sqrt{-g} \left[\left( \frac{-1}{\kappa}R + L_m\right)\int_{R_+ \times S^2} P(x,\lambda,\bar k)+\delta P(x,\lambda,\bar k) d\Omega d \lambda\right]  \\
    &= \int_M \left( \frac{-1}{\kappa}R + L_m\right) [\chi(x)+\delta \chi(x)] \sqrt{-g}d^4x,  \label{eq:mmtactionvariation}
\end{align}
where $\delta \chi(x)= \int_{R_+ \times S^2} \delta P(x,\lambda,\bar k) d\Omega d \lambda$. Therefore, we see immediately that a variation of the state $P(x,\lambda, \kbar)$ induces a variation in the modified measure, and hence, if we vary the state under the requirement that the variation respect a volume constraint, as is usually required in MMT, the dynamics obtained from this variation principle are identical to that of MMT with a single modified measure. Following \cite{NGVE} we get the following system of equations
\begin{align}
    R_{\mu\nu}-\frac{1}{2}Rg_{\mu\nu}= \frac{\kappa}{2}\left(T_{\mu\nu}+ M g_{\mu\nu}\right) + \frac{1}{\chi}\left( \chi_{,\mu;\nu} -\right.&\left.g_{\mu\nu} \,\Box\,\chi  \right), \\
   \Box\,\chi- \frac{\kappa}{D-1}\left[ \left(M + \frac{1}{2} T\right)+ \frac{(D-2)}{2}L_m           \right]&\chi =0 .
\end{align}
Where $T$ is the trace of the stress energy tensor and the integration constant $M=-\frac{2 \Lambda}{\kappa}$ takes the role of the cosmological constant. We see that this system of equations can only give rise to solutions compatible with the equations of Einstein's General Relativity if $\frac{1}{\chi}\left( \chi_{,\mu;\nu} -g_{\mu\nu} \Box\chi  \right)=0$. This holds, e.g., if $\chi=\text{const.}$ which requires $\left(M + \frac{1}{2} T\right)+ \frac{(D-2)}{2}L_m    =0$. That is, for the right hand side of \eqref{eq:variance} to vanish and for $L_m=0$, which can be satisfied if we minimize the action with respect to the fields in $L_m$. An interesting consequence of the above equations of motion is that we get a non-conservation of the conventionally defined matter stress energy that does not include a contribution from the field $\chi$, 
\begin{equation}\label{eq:nonconservation}
   \nabla^\mu T_{\mu\nu}= -2\, \frac{\partial L_m}{\partial g^{\mu\nu}}\,g^{\mu\alpha}\nabla_\alpha \ln{\chi}.
\end{equation}
This explains why the conventional derivation of the Einstein field equations from thermodynamic considerations never considered MMT as a possible alternative solution compatible with the usual arguments on the light cone. 

Now what happens if instead of general variations of the state $P(x, \lambda, \kbar)$, we constrain the variation to a subclass of states with $\chi(x)=\text{const.}$? This leads us directly to unimodular gravity. 

\subsection{Unimodular Gravity from a Density of States.}
Unimodular gravity was originally developed with a goal similar to MMT. If one limits the class of admissible variations of the metric to those that leave $\sqrt{-g}$ invariant, then adding a constant to the Lagrangian will not change the gravitational equations of motion. UG has been shown to appear as a special case of MMT with two measures when formulated in a generally covariant form~\cite{UnimodularGRI} 
\[ \mathcal{S}= \int d^4x \,\sqrt{-g} \, (R + 2 \Lambda + \mathcal{L}_m) - \int d^4 x \, \Phi(A) \,2 \Lambda \:. \]
 Here, a priori, $\Lambda$ is a dynamical scalar field, and~$\Phi(A)$ is as above.
However, variations with respect to~$A$ imply that~$\Lambda$ is a constant, whereas variations with respect to~$\Lambda$ yield~$\Phi(A) =\sqrt{-g}$.

This variation principle can be reformulated in terms of a measure on the PLCB as a minimization of 
\begin{equation}\label{eq:unimodular}
    \mathcal{S}= \int d^4x \sqrt{-g} \left\la(R + 2 \Lambda + \mathcal{L}_m)\right\ra_\Lcal
\end{equation}
under the constraint 
\begin{equation}\label{eq:constraint}
    \int d^4x \sqrt{-g} \left\la\Lambda \right\ra_\Lcal= const.
\end{equation}
If we vary the constraint with respect to $P(x,\lambda,\kbar)$ then we get as above, that ~$\Lambda$ is a constant, while variations with respect to~$\Lambda$ yields that $\chi(x)$ must be constant. All other variations proceed as usual. Hence, the constraint restricts us to the subclass of measures $dP(x, n^\mu)$ for which the density of states is constant in spacetime. Note that from a thermodynamic perspective, the constraint \eqref{eq:constraint} now has a clear interpretation: it is nothing else than the average energy $\la E\ra$ per state. This is again remeniscent of Padmanabhan's holographic equipartition~\cite{padmanabhan2010equipartition} and allows us to interpret the cosmological constant as the average energy per gravitational degree of freedom and the constraint as Boltzmann's equipartition of energy per degree of freedom\footnote{Alternatively one could try to  interpret it as the average spacetime volume per gravitational degree of freedom.}. 

At this point it is worth noting again that none of the variation principles in this section depend on the time-like vector field that we can associate with $dP(x, n^\mu)$ and accordingly this is truly just a bookkeeping device and we need not worry about the phenomenological constraints discussed in \cite{jacobson2008einstein}. We would also remind the reader that the physically relevant quantity derived from $dP(x, n^\mu)$ is Lorentz invariant and that, in general, there is a huge number of measures $dP(x, n^\mu)$ with the same $\chi(x)$ and $\la n^\mu\ra_\Lcal$.

\section{Summary and Outlook}\label{sec:outlook}
For better readability, we split this section into three parts. A short summary, an outlook that ranges from concrete to speculative, and the conclusion. 

\subsection{Summary}
On the mathematical side, starting from an invariant probability measure on the light cone, we presented in full detail the well-known fact from Lorentzian geometry that an average over the light cone always leads to a time-like vector field. Furthermore, we showed that the variance associated with any nondegenerate measure is a positive two tensor. 

On the physical side, we showed that the same setup can give rise to modified measures in a straightforward manner. Furthermore, we showed that unimodular gravity (UG), which is a subset of modified measure theories (MMT), can be characterized in terms of thermodynamic concepts. While MMT corresponds to the grand canonical ensemble where the number of gravitational degrees of freedom is free to change, UG corresponds to the canonical ensemble where the density of gravitational states is constant throughout spacetime. This aligns neatly with the observation in \cite{baryogenesis} in the context of causal fermion systems (CFS) that the matter/antimatter asymmetry originated in a shift of states from the gravitational sector to the matter sector. The present work strengthens the case for modified measures playing a key role in this mechanism. The fact that the stress energy tensor is not conserved in MMT is, in this context, not a bug but an essential feature. 

\subsection{Outlook}
In the present work, we established the basic viability of a theory assuming a field of probability measure or a density of states. We will now show that this rather simple idea actually opens a window to a wide variety of possibilities for future avenues of research, some more mathematical in nature, some more physical.

First, as noted in the body of the text, working on the light cone a priori assumes a lot of structure, as the light cone fixes the metric except for the conformal factor. If we intend for the formalism to have any bearing on the problem of quantum gravity, this is clearly not satisfactory. For this we would want to be able to derive the metric structure of spacetime from the fundamental ingredients alone. In an upcoming paper \cite{qbit} some of the present authors will show that we can indeed strip away most of the structure assumed here and reconstruct it instead from more fundamental assumptions. 

From the thermodynamic viewpoint, the present paper raises several interesting questions. One is whether we can gain a deeper understanding of the formulation of GR with respect to conjugate variables, cf. \cite{parattu2013structure,padmanabhan2014general}, where a vector field $\xi$ plays an important role. It is tempting to identify this vector field as the one derived from the state on the light cone. This is connected to another question we have left unanswered in the present work, namely how to connect $dP(x, n^\mu)$ to some notion of entropy of spacetime. 

As a follow-up the same question can be asked in the context of Padmanabhan's holographic equipartition in static spacetimes \cite{padmanabhan2012emergence,padmanabhan2014general,padmanabhan2010surface,padmanabhan2010equipartition,padmanabhan2004entropy}. Here, two aspects seem particularly interesting. First, his argument only works for a particular vector field, i.e., in the present language a particular class of measures. Second, for static spacetimes we can make a global split into positive and negative frequencies for the Dirac equation. For CFS this implies that we can define the Dirac sea globally. If, as suggested by the derivation of the Dirac equation in Minkowski space \cite{cfs}, we assume the gravitational degrees of freedom to be in the Dirac sea, then in this setting we get a clean split between the matter degrees of freedom and the gravitational ones. On a more speculative note, this line of investigation could suggest that, in this setting, holography only applies to static or stationary spacetimes. 

Furthermore, it would be interesting to see whether the present formalism can be connected to the line of reasoning developed in \cite{requardt2012gravitons,requardt2012observables,requardt2015incompleteness,requardt2023thermal}. Especially the ideas in the most recent paper in this series \cite{requardt2023thermal} might connect to the results some of the present authors are preparing in the aforementioned follow-up paper \cite{qbit}.

Finally, it seems possible to connect the formalism presented here with Verlinde's derivation of Modified Netwonian Dynamics from entropic arguments \cite{verlinde2017emergent, verlinde2011origin}. In particular, suppose that we consider the interacting region of de Sitter space in static coordinates. 
\begin{equation}
    ds^2= \left( 1- \frac{1}{3}\Lambda r^2 \right)dt^2 - \left( 1- \frac{1}{3}\Lambda r^2 \right)^{-1} dr^2 - r^2 (d\theta^2+\sin^2 \theta d\phi^2)
\end{equation}
If we start with a probability measure $P(x,n^\mu)$ on the PLCB such that $\partial_t=\xi=\la n^\mu\ra_P$ at all points $x\in \mathcal{M}$ with $r(x)\leq r_{\mathcal{H}_C}$ where $r_{\mathcal{H}_C}$ is the coordinate radius of the cosmological horizon. That is, the time-like vector field  $\xi$ associated with the measure $dP(x,n^\mu)$ is the hypersurface normal with respect to a $t=const$ surface in that coordinate system. In this setup Padmanabhan's holographic equipartition of states \cite{padmanabhan2012emergence,padmanabhan2014general,padmanabhan2010surface,padmanabhan2010equipartition,padmanabhan2004entropy} applies, which fits with Verlinde's assumption that the number of bulk degrees of freedom is equal to the number of surface degrees of freedom calculated from the cosmological horizon entropy. 

Now if one wants to realize Verlinde's perturbation of de Sitter space by adding a small mass $M$ at the center of the interacting region, one has to look for a suitable perturbation of the measure $dP(x,n^\mu)$. From the calculations in Section \ref{sec:probonlightcone} a variation $\delta P$ corresponds to a variation of the time-like vector field associated with it
\begin{equation}
    \la n^\mu \ra_{\Lcal(P+\delta P)} = \xi +\delta u.
\end{equation}
Given that $\la n^\mu \ra_{\Lcal(P)} $ is a hypersurface normal vector field, its variation $\delta u$ will necessarily be tangential to the hypersurface $t=const$ for any possible variation. Choosing $\delta P$ to be spherically symmetric around the center of the interacting region seems to be a good starting point for an attempt to reproduce his results from our formalism.
One key step is to relate such a variation $\delta P$ to the addition of a perturbative mass in the center of the interacting region used by Verlinde in his argument. Another question is whether we can associate $\delta P$ with a change in entropy in a suitable sense to match Verlinde's argument. It seems important to note here that Padmanabhan's holographic equipartition breaks down for the perturbed vector field. 

Many of our considerations in this paper were informed by the CFS theory. In particular, the fact that we chose to define the measure $dP(x, n^\mu)$ as a mixed state on the past light cone bundle (PLCB) was motivated by the prominent role that the Dirac sea plays in CFS. Given the emergent nature of spacetime in CFS it is natural to think about a thermodynamic interpretation for the theory. A first step in that direction will be taken in our forthcoming comparison paper between the two approaches \cite{cfstdqg} where the focus will be on the role of a minimal length scale \cite{padmanabhan2020probing}. Another approach is to try and interpret the weight of the measure $P(x,\lambda,\kbar)$ as a weight in the symbol of the fermionic projector. Dropping the spacetime dependence, this motivated the particular choices of $P(\lambda,\kbar)$ in Appendix \ref{sec:regularization} where $P_1$ corresponds to the so-called \textit{hard cut off} regularization and $P_2$ corresponds to the \textit{$i\varepsilon$} regularization. One could then try to interpret the fermionic projector as the Wigner-Weyl quantization cf.\cite{derezinski2020pseudodifferential} of the density of states and relate it to the ideas in \cite{padmanabhan2020geodesic} concerning spacetime correlations. 

To complete the Boltzmannian picture, the missing piece in the present work is an evolution equation for the measure $P(x,\lambda,\kbar)$. Taking again inspiration from CFS where it was shown in \cite{cfstransporteq} that the regularization satisfies a transport equation along null geodesics, this would be the natural first candidate. Similarly, one could postulate for $P(x,\lambda,\kbar)$ to satisfy the massless Vlasov equation.

Finally, there are many choices in this paper that one might make differently. For example, given that our framework is defined on the PLCB, instead of the action principles \eqref{eq:mmtaction} and \eqref{eq:unimodular} one could try building an action involving expressions of the form of the left-hand side of \eqref{eq:jacobson} or \eqref{eq:money} depending explicitly on the null vectors $n^\mu$. Given that such a variation principle contains terms of the form $n^\mu n^\nu$, it could give rise to equations of motion quadratic in the vector fields as they appear in \cite{hossenfelder2017covariant} for a covariant version of Verlinde’s emergent gravity.

\appendix

\section{Positivity Lemma } \label{sec:lemma}
First we will prove a the following Lemma
\begin{Lemma}\label{lem:positivity}
    Let $P(x,y)$ be positive and symmetric, then 
    \begin{equation}
        \int_{R_+\times R_+}  x (x-y) P(x,y) dx dy \geq0
    \end{equation}
\end{Lemma}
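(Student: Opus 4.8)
The plan is to exploit the symmetry of $P$ by a change of variables $x\leftrightarrow y$ and then symmetrize the integrand. Write $I:=\int_{\R_+\times\R_+} x(x-y)\,P(x,y)\,dx\,dy$. Relabeling the dummy variables and using $P(x,y)=P(y,x)$, we also have $I=\int_{\R_+\times\R_+} y(y-x)\,P(x,y)\,dx\,dy$. Adding the two representations gives
\begin{equation}
2I=\int_{\R_+\times\R_+}\bigl[x(x-y)+y(y-x)\bigr]P(x,y)\,dx\,dy=\int_{\R_+\times\R_+}(x-y)^2\,P(x,y)\,dx\,dy .
\end{equation}
Since $(x-y)^2\geq 0$ and $P(x,y)\geq 0$ pointwise, the integrand on the right is nonnegative, hence $I\geq 0$, which is the claim.

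The only technical care needed is that the manipulation above is legitimate, i.e. that the relabeling and the addition of the two integrals are justified. If $I$ is assumed to be a well-defined (finite) integral — as it will be in the application, where $P(\lambda)P(\kappa)$ is a probability density with finite second moment by the finiteness-of-variance hypothesis in Section \ref{sec:probonlightcone} — then Fubini/Tonelli applies and the change of variables is just a measure-preserving relabeling of coordinates, so no subtlety arises. More robustly, one can first establish $\int (x-y)^2 P(x,y)\,dx\,dy\geq 0$ directly from nonnegativity of the integrand (this always makes sense in $[0,\infty]$), and then note that splitting $(x-y)^2 = x(x-y)+y(y-x)$ recovers $2I$ exactly when the pieces are integrable.

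I do not expect a genuine obstacle here: the symmetrization trick is the whole content, and the positivity of $P$ together with the perfect square does the rest. The only place to be mildly attentive is bookkeeping of integrability assumptions so that the split into $x(x-y)$ and $y(y-x)$ is valid; in the intended use this is guaranteed by the finite-variance condition.
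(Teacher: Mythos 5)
Your proof is correct, and it takes a genuinely different (and somewhat slicker) route than the paper's. The paper splits the domain $\R_+\times\R_+$ along the diagonal $\{x=y\}$, performs the substitutions $l=x-y$ on one half and $k=y-x$ on the other, and then uses the symmetry of $P$ to merge the two resulting integrals into $\int_0^\infty\int_0^\infty l^2 P(l+y,y)\,dy\,dl\geq 0$. You instead symmetrize globally: swap the dummy variables, invoke $P(x,y)=P(y,x)$, and add the two representations to obtain $2I=\int (x-y)^2 P(x,y)\,dx\,dy\geq 0$. Both arguments reduce the claim to integrating a perfect square against a nonnegative weight; yours reaches that point with less bookkeeping and no change of variables, while the paper's diagonal decomposition makes the sign structure of $x(x-y)$ on the two half-planes explicit. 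Your attention to integrability is also a genuine improvement over the paper, which silently splits the integral into a positive and a negative part; in the intended application $P(x,y)=P(\lambda)P(\kappa)$ has finite second moments by hypothesis, so both your relabeling and the paper's splitting are justified, but it is worth having said so.
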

\begin{proof}
We simply split the integral along the diagonal
\begin{align}
      &\int_{R_+\times R_+}  x (x-y) P(x,y) dx d y \\
   &= \int_0^\infty \int_y^\infty x (x-y) P(x,y)   dx d y+\int_0^\infty \int^\infty_x x (x-y) P(x,y)  d y dx\\
   &=\int_0^\infty\int_0^\infty(l+y)lP(l+y,y)dy dl +\int_0^\infty\int_0^\infty x(-k)P(x,x+k)dx dk\label{eq:changeofvar}\\
   &=\int_0^\infty\int_0^\infty \left[(l+y)lP(l+y,y)-l y P(l,l+y)\right]dy dl \label{eq:relabel}\\
   &=\int_0^\infty\int_0^\infty l^2P(l+y,y)dy dl\ge0
\end{align}

Where in \eqref{eq:changeofvar} we set $l=x-y$ in the first integral and $k=y-x$ in the second integral. In \eqref{eq:relabel} we simply relabel the variables in the second integral $x=l$ and $k=y$ and sum the integrals.

\end{proof}

\section{Average Vector Field for Specific Probability Distributions}\label{sec:regularization}

In the following, we will calculate $\lambda_{\text{avg}}$ for three particularly relevant probability measures characterized by their weight function with respect to a particular choice of tetrad $P_0=\frac{1}{4\pi}\delta (\varepsilon^{-1}-\lambda)$, $P_1= \frac{\varepsilon}{4\pi}\theta(\lambda)\theta(\varepsilon^{-1} -\lambda)$, and $P_2= \frac{\varepsilon}{4\pi} e^{-\varepsilon \lambda} $. Here, $\theta(\lambda)$ corresponds to the Heaviside function and $P_1$ accordingly to the cutoff regularization in CFS. $P_2$ on the other hand corresponds to the $i\varepsilon$ regularization. 

It is clear that for $i=\{0,1,2\}$
\begin{equation}
    \int_{R_+ \times S^2}  P_i(\lambda,\bar k) d\Omega d \lambda=1
\end{equation}
holds.
We get for $P_0$
\begin{align}
    \lambda_{avg}&= \int_{R_+ \times S^2} \lambda  P_0(\lambda,\bar k) d\Omega d \lambda\\
    &= \epsilon^{-1}
\end{align}
Note that this measure is degenerate of order one, as the zero-zero component of the variance vanishes. Such states are of interest in our follow-up paper \cite{qbit} as they can be related to Hartle-Hawking's no-boundary proposal \cite{hartle1983wave}. 

We get for $P_1$
\begin{align}
    \lambda_{avg}&= \int_{R_+ \times S^2} \lambda  P_1(\lambda,\bar k) d\Omega d \lambda\\
    &=\int_0^{\varepsilon^{-1}}\varepsilon\lambda d\lambda \\
    &= \left[  \frac{\varepsilon \lambda^2}{2} \right]_0^{\varepsilon^{-1}} \\
    &= \frac{\varepsilon^{-1}}{2}. 
\end{align}
We get for $P_2$
\begin{align}
    \lambda_{avg}&= \int_{R_+ \times S^2} \lambda  P_2(\lambda,\bar k) d\Omega d \lambda\\
     &=\int_0^\infty\varepsilon \lambda  e^{-\varepsilon\lambda} d\lambda \\
     &= - \left[\lambda e^{-\varepsilon\lambda} \right]_0^\infty + \int_0^\infty   e^{-\varepsilon\lambda} d\lambda \\
     &= - \left[ \varepsilon^{-1}  e^{-\varepsilon\lambda}   \right]_0^\infty \\
     &= \varepsilon^{-1}
\end{align}
Therefore all three states are naturally associated with a  vector field $\varepsilon^{-1}e_0^\mu$ which we can identify with the regularizing vector field of the locally rigid regularization used for the baryogenesis result. There is a certain constant rescaling of $\lambda_{avg}$ between different regularizations that might matter in detailed phenomenological calculations. Over all this gives a tempting connection to CFS.

Note that it is a curious observation that due to the $S^2$ factor in the light cone we always need to normalize by a factor of $4\pi$, which Padmanabhan needs to determine the current value of the cosmological constant. This need not mean much, as it will obviously show up in any calculation that involves $S^2$, however, due to the fact that we demonstrated in \eqref{eq:money} and \eqref{eq:moneynormalized} that there is a direct connection between $dP$ and the cosmological constant, there might be something more to it. 

Note that if we make any of these states dependent on $x$ by $P_0=\frac{1}{4\pi}\delta (\varepsilon(x)^{-1}-\lambda)$, $P_1= \frac{\varepsilon_0}{4\pi}\theta(\lambda)\theta(\varepsilon(x)^{-1} -\lambda)$ and $P_2= \frac{\varepsilon_0}{4\pi} e^{-\varepsilon (x)\lambda} $ then for $P_1$ and $P_2$ we immediately get $\nabla_\mu\chi(x)\neq 0$ when we allow the regularization length $\varepsilon(x)$ to vary, while for $P_0$ we get $\chi(x)=\text{const}$.

\bibliographystyle{amsplain}
\bibliography{claudio}

\end{document}